\documentclass[reqno,11pt,a4paper]{amsart}
\usepackage{csquotes}
\usepackage{amssymb}
\usepackage{mleftright}  
\usepackage{xfrac} 
\usepackage{mathrsfs}
\usepackage[dvipsnames]{xcolor}
\usepackage[colorlinks,linkcolor=RedViolet,citecolor=PineGreen,anchorcolor=Green,urlcolor=RedViolet]{hyperref}
\usepackage[headings]{fullpage}
\usepackage[numbers,square]{natbib}
\usepackage[shortlabels,inline]{enumitem}
\usepackage{anyfontsize}
\usepackage{scalerel}
\usepackage{bbm} 
\usepackage[foot]{amsaddr}
\usepackage[hang,flushmargin]{footmisc} 
\usepackage{comment}

\newcommand{\adjustedaccent}[1]{%
  \mathchoice{}{}
    {\mbox{\raisebox{-.5ex}[0pt][0pt]{$\scriptstyle#1$}}}
    {\mbox{\raisebox{-.35ex}[0pt][0pt]{$\scriptscriptstyle#1$}}}
}
\newcommand\invbreve[1]{\overset{\adjustedaccent{\smallfrown}}{#1}}
\makeatletter
\newcommand{\optionaldesc}[2]{%
  \phantomsection
  #1\protected@edef\@currentlabel{#1}\label{#2}%
}
\makeatother

\newcommand{\textcite}{\citet*}
\newcommand{\E}{\textsf{\upshape E}}
\newcommand{\Var}{\textsf{\upshape Var}}

\renewcommand{\P}{\textsf{\upshape P}}
\newcommand{\Q}{\textsf{\upshape Q}}

\newcommand{\indicator}[1]{\mathbf{1}_{#1}}
\DeclareMathOperator*{\esssup}{ess\,sup}
\DeclareMathOperator*{\essinf}{ess\,inf}
\newcommand{\newinf}{\mathop{\mathrm{inf}\vphantom{\mathrm{sup}}}}

\newcommand{\vfrac}[2]{\ensuremath{#1 / #2}}

\newcommand{\N}{\mathbb{N}}

\newcommand{\R}{\mathbb{R}}

\newcommand{\id}{{\operatorname{id}}}

\newcommand{\aux}{p}

\newcommand{\hW}{\hat{W}}
\newcommand{\tW}{\widetilde{W}}
\newcommand{\tK}{\widetilde{K}}
\newcommand{\MV}{\textsc{mv}}
\newcommand{\MMV}{\textsc{mmv}}
\newcommand{\M}{\textsc{m}}

\newcommand{\xinf}{\underline{x\mkern-1.5mu}\mkern1.5mu}
\newcommand{\xsup}{\overline{x}}
\newcommand{\xmid}{\hat{x}}
\newcommand{\bliss}{\invbreve{x}}
\renewcommand{\d}{\mathrm{d}}

\newcommand{\e}{\mathrm{e}}

\DeclareMathOperator{\dom}{dom}

\newcommand\shorturl[1]{%
  \href{http://#1}{\nolinkurl{#1}}%
}

\makeatletter
\newcommand*{\bigcdot}{}
\DeclareRobustCommand*{\bigcdot}{%
  \mathbin{\mathpalette\bigcdot@{}}%
}
\newcommand*{\bigcdot@scalefactor}{.7}
\newcommand*{\bigcdot@widthfactor}{1.15}
\newcommand*{\bigcdot@}[2]{%
  \sbox0{$#1\vcenter{}$}
  \sbox2{$#1\cdot\m@th$}%
  \hbox to \bigcdot@widthfactor\wd2{%
    \hfil
    \raise\ht0\hbox{%
      \scalebox{\bigcdot@scalefactor}{%
        \lower\ht0\hbox{$#1\bullet\m@th$}%
      }%
    }%
    \hfil
  }%
}
\makeatother

\newtheorem{theorem}{Theorem}[section]
\theoremstyle{plain}

\newtheorem{corollary}[theorem]{Corollary}

\newtheorem{example}[theorem]{Example}

\newtheorem{proposition}[theorem]{Proposition}
\newtheorem{remark}[theorem]{Remark}

\numberwithin{equation}{section}

\begin{document}
\title[{}]{Cash-invariant hull representation\\of divergence preferences}
\author{Ale\v{s} \v{C}ern\'{y}$^{a,b}$}
\address{$^a$Bayes Business School, City St George's, University of London\\
106 Bunhill Row, London EC1Y 8TZ, UK\vspace{-1.1ex}}
\email{\!ales.cerny@citystgeorges.ac.uk}
\author{Johannes Ruf$^b$}
\address{$^b$Department of Mathematics\\
  London School of Economics and Political Science\\
	Houghton Street, London, WC2A 2AE, UK\vspace{-1.1ex}}
\email{\!j.ruf@lse.ac.uk}
\author{Martin Schweizer$^c$}
\address{$^c$Department of Mathematics\\
 ETH Zurich\\
	R\"amistrasse 101, CH-8092 Zurich, Switzerland\vspace{0.4ex}}
\email{\!martin.schweizer@math.ethz.ch}
\date{\today}

\begin{abstract}
Uniformly weighted divergence preferences (UWDP) introduced in \citet{maccheroni.al.06} are an important class of risk-averse preferences that contain as a special case the monotone mean--variance utility. UWDP are characterised by the lowest expected value of an act in $L^\infty$ under an adversarially chosen probability measure combined with the divergence of this measure. Our main result provides an alternative, computationally friendlier formula, which establishes in full generality that UWDP are the translation-invariant hull of state-independent expected utility over $L^0$. Some consequences of the new representation are studied.   
\end{abstract}

\maketitle
\mleftright 
\section{Introduction}
This paper continues the study of divergence preferences introduced and axiomatized in \textcite{maccheroni.al.06} as part of the wider class of so-called variational preferences. We restrict our attention to uniformly weighted divergence preferences, which are known to be probabilistically sophisticated \cite[Theorem~16]{maccheroni.al.06} and already cover the special instance of the monotone mean--variance (MMV) preferences of \textcite{maccheroni.al.09}. In this case, for a suitable non-negative convex function $f$ with $f(1)=0$, the wealth distributions (acts) $W$ are ranked by the divergence utility $V:L^\infty\to \R$, given by 
\begin{equation}\label{eq:V}
V(W) = \inf_{ Z\in L_{+}^{1}, \E[Z]= 1 }\{\E[WZ+f(Z)]\}.
\end{equation}
In \eqref{eq:V}, the act $W\in L^\infty$ is evaluated based on the worst outcome of its expected value $\E[WZ]$ under alternative models of the world ``penalized'' by the divergence $\E[f(Z)]$. It follows from \eqref{eq:V} that divergence preferences are \emph{cash invariant}, i.e., for $\eta\in\R$ and $W\in L^\infty$, one has $V(W+\eta)=V(W)+\eta$. 

In practice, for example when studying optimal portfolio allocation, it is important to extend the domain of UWDP utility beyond $L^\infty$.  To gain a better understanding of \eqref{eq:V} and its possible extensions, let us consider the specific   divergence function 
\begin{equation*}
f_{\MV} = \frac{1}{2}(\id-1)^2
\end{equation*}
as an example.  In such case, the domain of the divergence utility can be extended to $L^2$ by setting 
\begin{equation}\label{eq:MMV1}
V_{\MMV}(W) ={}  \inf_{ Z\in L^{2}_+, \E[Z]= 1 }\Big\{\E\Big[WZ+\frac{1}{2}(Z-1)^2\Big]\Big\},\quad W\in L^2.
\end{equation}
Note that this expression agrees with \eqref{eq:V} if $W \in L^\infty$ and $f = f_{\MV}$.

Let us now provide a useful alternative representation of $V_{\MMV}$. To this end, we remove the positivity constraint on $Z$ in \eqref{eq:MMV1} and define the standard mean-variance utility $V_{\MV}: L^2 \to \R$ by
\begin{align}\label{eq:MV}
V_{\MV}(W) ={}& 
\displaystyle\inf_{ Z\in L^{2}, \E[Z]= 1 }\Big\{\E\Big[WZ+f_{\MV}(Z)\Big]\Big\}
= \E[W]-\frac{1}{2}\Var(W). 
\end{align}
Indeed, the second equality  follows by utilising the identity 
$$\E[WZ+\tfrac{1}{2}(Z-1)^2]=\E[W+(Z-1)(W-\E[W])+\tfrac{1}{2}(Z-1)^2]$$
for $Z\in L^{2}$ with $\E[Z]= 1$, completing the square, and observing that 
$$
\inf_{ Z\in L^{2}, \E[Z]= 1 }\left\{\E\left[(Z-1+W-\E[W])^2\right]\right\}=0
$$
for any $W \in L^2$.

\textcite[Theorem~24]{maccheroni.al.06} and  \textcite[Theorem~2.1]{maccheroni.al.09} show that $V_{\MMV}$ is the minimal monotone modification of $V_{\MV}$, i.e., 
$$V_{\MMV}(W) = \sup_{Y\in L^\infty_+} V_{\MV}(W-Y).$$  
For this reason, $V_{\MMV}(W)$ is called the \emph{monotone mean--variance} utility.
The point of departure of this paper is an alternative characterisation of $V_{\MMV}$ provided in \textcite{cerny.20}. This begins with an observation that the classical mean--variance utility can also be obtained as the cash-invariant hull of the expected quadratic utility 
\begin{equation*}
g_{\MV} =  \id - \tfrac{1}{2} \id^2,
\end{equation*} 
i.e., 
\begin{equation}\label{eq:MV2}
V_{\MV}(W) ={}  \sup_{ \eta\in \R }\, \left\{\E\left[g_{\MV}(W+\eta)\right]-\eta\right\},\quad W \in L^2.
\end{equation}
Since the operations of taking the cash-invariant hull and the monotone hull commute, it is seen \cite[Equation (9)]{cerny.20} that $V_{\MMV}$ is the cash-invariant hull of the expected monotonized quadratic utility 
\begin{equation*}
g_{\MMV} = \id\wedge1-\tfrac{1}{2}(\id\wedge1)^2,
\end{equation*} 
i.e.,
\begin{equation}\label{eq:MMV2}
V_{\MMV}(W) ={}  \sup_{ \eta\in \R }\, \left\{\E\left[g_{\MMV}(W+\eta)\right]-\eta\right\},\quad W \in L^2.
\end{equation}

Observe that $f=f_{\MV}$ and $g = g_{\MV}$ stand in a conjugate relationship,
\begin{equation}\label{concave conjugate}
g(x) = \inf_{y\in\R}\{xy+f(y)\},\qquad x\in\R.
\end{equation} 
The evident equality between \eqref{eq:MV} and \eqref{eq:MV2}, and the somewhat unexpected equality between \eqref{eq:MMV1} and \eqref{eq:MMV2} suggest the more general identity
\begin{equation}\label{OCE}
\sup_{\eta\in\R}\,\{\E[g(W+\eta)]-\eta\} = \inf_{Z\in L^{1}, \E[Z]= 1 } \{\E[WZ+f(Z)] \}
\end{equation}
for conjugate pairs $f,g$ and \emph{all} appropriately chosen random variables $W$. Note that the monotone mean--variance divergence  $f_{\MMV}$ now explicitly penalizes negative values, i.e., one must take $f_{\MMV} = f_{\MV} + \infty\indicator{\id<0}$ to obtain that $g = g_{\MMV}$ and $f = f_{\MMV}$ are conjugate as per \eqref{concave conjugate}.

For $W\in L^\infty$, the identity \eqref{OCE} was established in \textcite[Theorem~4.2]{bental.teboulle.07} for $f$ finite only inside the positive half-line and with $1$ in the interior of its domain, i.e., for monotone $g$ whose slopes take values on both sides of $1$. The next theorem extends \eqref{OCE} to random variables $W$ from the widest possible subset of $L^0$ (e.g., the domain of \eqref{eq:MMV1} and \eqref{eq:MMV2} is $L^0_+-L^2_+$) and places no restrictions on $f$ other than convexity and lower semicontinuity. The theorem establishes that there is no conceptual gap between divergence preferences on the one hand and the translation-invariant hull of expected utility on the other hand.
\begin{theorem}\label{T1}
 Assume $f:\R\to (-\infty,\infty]$ is a convex lower semicontinuous function. Denote by $g$ the concave conjugate of $-f$. Consider $W\in L^0$ and assume that there exists $\theta\in\R$ such that $g(W+\theta)^-\in L^1$. Then $(WZ + f(Z))^- \in L^1$ for any $Z \in L^1$ and
\begin{equation}\label{eq:T1}
\sup_{\eta\in\R}\,\{\E[g(W+\eta)]-\eta\} 
= \inf_{ Z\in L^{1}, \E[Z]= 1 } \{\E[WZ+f(Z)]\},
\end{equation}
where we set $\E[g(W+\eta)] = -\infty$, whenever $g(W+\eta)^-\notin L^1$. Furthermore, if $1$ is in the interior of $\dom f$, then the supremum in \eqref{eq:T1} is attained.
\end{theorem}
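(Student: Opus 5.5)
The plan is to prove weak duality from the Fenchel--Young inequality, and strong duality by a first-order optimality argument in the single scalar variable $\eta$. The remaining boundary cases are then handled by approximation.

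\emph{Integrability and weak duality.} By definition $g(x)=\inf_{y}\{xy+f(y)\}$, so $g(x)\le xy+f(y)$ for all $x,y\in\R$. Applied pointwise with $x=W+\theta$ and $y=Z$, this gives
\[
WZ+f(Z)\ \ge\ g(W+\theta)-\theta Z .
\]
Since $g(W+\theta)^-\in L^1$ and $\theta Z\in L^1$, we get $(WZ+f(Z))^-\in L^1$. The same inequality with a general $\eta$ in place of $\theta$ handles any $Z\in L^1$ with $\E[Z]=1$. If $g(W+\eta)^-\in L^1$, taking expectations gives $\E[g(W+\eta)]-\eta\le \E[WZ+f(Z)]$. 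Otherwise the left-hand side is $-\infty$ by convention. Hence ``$\le$'' holds in \eqref{eq:T1}.

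\emph{Strong duality when the supremum is attained.} Set $\varphi(\eta)=\E[g(W+\eta)]-\eta$. This is a concave function of $\eta$, finite at $\theta$. The left and right derivatives of $g$ are monotone difference quotients, so monotone convergence (with the integrable lower bound at $\theta$) should give
\[
\varphi'_\pm(\eta)=\E[g'_\pm(W+\eta)]-1 .
\]
Suppose $\eta^*$ is a maximiser. Then $\E[g'_+(W+\eta^*)]\le 1\le \E[g'_-(W+\eta^*)]$. Choose a constant $\lambda\in[0,1]$ and put
\[
Z^*=(1-\lambda)\,g'_+(W+\eta^*)+\lambda\, g'_-(W+\eta^*),
\]
with $\lambda$ chosen so that $\E[Z^*]=1$. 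Pointwise $Z^*\in\partial g(W+\eta^*)$, so by the equality case of Fenchel--Young, $g(W+\eta^*)=(W+\eta^*)Z^*+f(Z^*)$. Taking expectations gives $\E[WZ^*+f(Z^*)]=\varphi(\eta^*)$, which is the reverse inequality.

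Some care is needed when $W+\eta^*$ hits the boundary of $\dom g$, where a one-sided derivative may be infinite. There one uses that $f$ is lower semicontinuous and $\partial g(x)=\arg\min_y\{xy+f(y)\}$ to select a finite minimiser measurably.

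\emph{Attainment when $1\in\operatorname{int}\dom f$.} Pick $\varepsilon>0$ with $1\pm\varepsilon\in\dom f$. Fenchel--Young with $y=1\pm\varepsilon$ gives
\[
g(W+\eta)-\eta\ \le\ W(1\pm\varepsilon)\pm\varepsilon\eta+f(1\pm\varepsilon).
\]
On the event $\{|W|\le K\}$, which has positive probability for large $K$, the right-hand side tends to $-\infty$ as $\eta\to\mp\infty$. On the complement, the integrand stays bounded above by an affine bound uniform in $\eta$ (using $g(W+\eta)\le g(W+\theta)+$ slope terms). Hence $\varphi(\eta)\to-\infty$ as $|\eta|\to\infty$. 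Upper semicontinuity of $\varphi$, via Fatou's lemma, then yields a maximiser, and the previous step applies.

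\emph{General $f$: the main obstacle.} This is the case $1\notin\operatorname{int}\dom f$. If $1\notin\overline{\dom f}$, say $\dom f\subset(-\infty,a]$ with $a<1$, then the right-hand side equals $+\infty$, because any $Z$ with $\E[Z]=1$ has $\P(Z>a)>0$. On the other side, $g$ has slopes at most $a$ near $-\infty$, so $\varphi(\eta)\to+\infty$ as $\eta\to-\infty$; the case $\dom f\subset[a,\infty)$ with $a>1$ is symmetric.

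The delicate case is when $1$ is an endpoint of $\dom f$, or $\dom f=\{1\}$. Here I would approximate $f$ from below by
\[
f_n(y)=\inf_{u}\{f(u)+n|y-u|\},
\]
whose domain is all of $\R$. The conjugate $g_n$ equals $g$ restricted to slopes in $[-n,n]$ and increases to $g$. Apply the attained case to $f_n$ and pass to the limit. The inequality $\sup_\eta\varphi_n\le\sup_\eta\varphi$ is straightforward from $g_n\le g$. The hard part is showing that the dual values for $f_n$ converge to those for $f$: the near-optimal $Z_n$ may drift into the region where $f_n<f$. For that I would try a compactness argument, using de la Vallée-Poussin uniform integrability of $\{Z_n\}$ when $f$ is superlinear, and Komlós-type convex combinations together with lower semicontinuity of $f$ otherwise.
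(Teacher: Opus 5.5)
Your weak-duality step and your subgradient construction match the paper's argument in case~\ref{case.A}. The gap is the claim that every maximiser $\eta^*$ satisfies $\E[g'_+(W+\eta^*)]\le 1\le \E[g'_-(W+\eta^*)]$.

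The formula $\varphi'_\pm(\eta)=\E[g'_\pm(W+\eta)]-1$ holds only on the interior of $\dom\varphi$. That domain is determined by integrability of $g(W+\eta)$, not by $\dom g$, so it can have an endpoint even when $\dom g=\R$. At $\eta^*=\inf\dom\varphi$ the left derivative of $\varphi$ is $+\infty$, while $\E[g'_-(W+\eta^*)]-1$ can be finite and negative.

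Example~\ref{E:1} with $\alpha>3$ is a concrete counterexample. There $g(w)=1-\e^{\e^{-w}-1}$ is smooth on $\R$, $1\in\operatorname{int}\dom f$, $\dom\varphi=[0,\infty)$, the maximiser is $\eta^*=0$, and $\E[g'(W)]<1$. Since $\partial g(W)=\{g'(W)\}$, Fenchel equality forces any dual minimiser to equal $g'(W)$, whose mean is not $1$. So the infimum is not attained and your $Z^*$ does not exist. Your caveat about $W+\eta^*$ hitting the boundary of $\dom g$ is a different issue, since here $\dom g=\R$.

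The missing idea is an approximation at such corner solutions, which the paper supplies in cases~\ref{case.B}--\ref{case.D} via Proposition~\ref{P:240424} and Corollary~\ref{C:240624}:
\begin{enumerate}[(i)]
\item Non-integrability of $g(\hW-\varepsilon/2)$ forces $\E[g'_+(\hW-\varepsilon)]=\infty$.
\item Hence a truncated downward shift $\tW$ with $0\le\hW-\tW\le\varepsilon$ can raise expected marginal utility to exactly $1$ while moving expected utility only by $O(\varepsilon)$.
\item Fenchel equality at $\tW$ then gives $Z_n$ with $\E[Z_n]=1$ and $\E[\hW Z_n+f(Z_n)]\to\E[g(\hW)]$.
\end{enumerate}
When $\dom\varphi$ is a singleton, this needs an additional split of $\Omega$ on $\{\hW<\xmid\}$.

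Your treatment of the endpoint case $1\in\dom f\setminus\operatorname{int}\dom f$ also fails. The concave conjugate of $f_n=\inf_u\{f(u)+n|\cdot-u|\}$ is $g_n=g$ on $[-n,n]$ and $g_n=-\infty$ outside; it is the $x$-domain that shrinks, not the slopes. So for any unbounded $W$ you get $\E[g_n(W+\eta)]=-\infty$ for every $\eta$ and $n$. The hypothesis of the theorem then fails for $f_n$, and there is nothing to pass to the limit. For instance, with $f=\infty\indicator{\id\neq1}$, $g=\id$ and $W\in L^1\setminus L^\infty$, both sides of \eqref{eq:T1} equal $\E[W]$, yet every approximating primal value is $-\infty$.

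No approximation is needed here; the monotonicity idea you already use for $1\notin\overline{\dom f}$ suffices:
\begin{enumerate}[(i)]
\item If $1\notin\operatorname{int}\dom f$, all slopes of $g$ lie on one side of $1$, so $\eta\mapsto g(W+\eta)-\eta$ is monotone.
\item Monotone convergence, with lower bound $g(W+\theta)-\theta$, together with Fenchel--Moreau gives $\sup_\eta\varphi(\eta)=\E[W+f(1)]$.
\item On the dual side, $\E[Z]=1$ with $Z\in\dom f$ a.s.\ forces $Z=1$.
\end{enumerate}
This is how the paper handles all of $1\notin\operatorname{int}\dom f$ at once.

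Two smaller points. First, $\varphi(\theta)$ may be $+\infty$; this is harmless by weak duality, but you should not call it finite. Second, your coercivity bound on $\{|W|>K\}$ is not uniform in $\eta$ as stated. The paper avoids this by showing, via monotone convergence, that
$\lim_{\eta\uparrow\infty}\varphi'_-(\eta)=g'_-(\infty)-1<0<g'_+(-\infty)-1=\lim_{\eta\downarrow-\infty}\varphi'_+(\eta)$.
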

The paper is organised as follows. Section~\ref{S:2} is devoted to the proof of the main theorem, related statements, and examples. Section~\ref{S:3} discusses three consequences of the cash-invariant representation: in Subsection~\ref{SS:3.1} we study the maximal complete market divergence utility, in Subsection~\ref{SS:3.2} we discuss canonical form of UWDP, and in Subsection~\ref{SS:3.3} we highlight useful lessons for domains of monotonicity of non-monotone UWDP, such as the classical mean--variance utility~\eqref{eq:MV2}.  

\section{Proof of the main theorem, related statements, and examples}\label{S:2}

\begin{proof}[Proof of Theorem~\ref{T1}]  
If $f =\infty$, then $g=\infty$ and the statement holds trivially. Assume now $\dom f$ is not empty. Observe that $g$ is then proper, concave, and upper semicontinuous by the standard properties of conjugates; see \textcite[Theorem~12.2]{rockafellar.70}. When $\dom g$ is a singleton, say $\{c\}$, we have that $f = g(c)-c\mkern2mu\id$, $W$ is constant, and \eqref{eq:T1} therefore holds easily. We may henceforth assume 
\begin{equation}\label{eq:dom g}
\xinf = \inf\dom g < \sup\dom g = \xsup.
\end{equation} 

By \cite[Corollary~23.5.1]{rockafellar.70}, $\partial g$ takes values in $\dom f$ and by \cite[Theorem~24.1]{rockafellar.70} one has
\begin{equation}\label{eq:240605}
\inf\dom f = g'_-(\infty),\qquad \sup\dom f = g'_+(-\infty),
\end{equation}
with the convention that $g'_+(-\infty)=\infty$ if $\dom g$ is bounded below and $g'_-(\infty)=-\infty$ if $\dom g$ is bounded above. From \eqref{concave conjugate}, we obtain the Fenchel inequality 
\begin{equation}\label{Fenchel ineq}
g(x) \leq xy+f(y),\qquad x,y\in\R,
\end{equation}
which yields 
\begin{equation} \label{eq:240209}
g(W + \theta) \leq (W+\theta)Z + f(Z)
\end{equation}
 for any $Z\in L^0$. By assumption, there is $\theta\in\R$ such that $g(W+\theta)^-\in L^1$; hence we also have $(WZ + f(Z))^- \in L^1$ for any $Z \in L^1$. 
If $g(W + \theta)\notin L^1$, then the supremum of $+\infty$ is attained at $\theta$ and the two sides of \eqref{eq:T1} are equal by the Fenchel inequality \eqref{eq:240209}. In addition to \eqref{eq:dom g}, we may therefore from now on assume that $g(W + \theta)\in L^1$.

Consider first the case when $1$ does not lie in the interior of the domain of $f$. Then by \eqref{eq:240605} one has either $g'_+(-\infty) \leq 1$ or $g'_-(\infty) \geq 1$. Therefore, the function $g(W+\id)-\id$ is monotone. 
By the Fenchel--Moreau theorem (e.g., \cite[Theorem~12.2]{rockafellar.70}), we have
$$\E\left[\sup_{\eta\in\R}\{g(W+\eta)-\eta\}\right] = \E[W + f(1)] =  \inf_{ Z\in L^{1}, \E[Z]= 1 } \{\E[WZ+f(Z)]\}$$
since any $Z$ not almost surely equal to $1$ implies $\E[f(Z)] = \infty$.  By the monotonicity of the function $g(W+\id)-\id$ and by monotone convergence, in conjunction with the assumption that $g(W+\theta)^-\in L^1$,   we may exchange the expectation and the supremum on the left-hand side of the last display. This yields the statement in the case when $1$ does not lie in the interior of the domain of $f$. 

From now on we assume that $1$ is an interior point of $\dom f$, i.e. by \eqref{eq:240605},
\begin{equation*}
g'_-(\infty)<1<g'_+(-\infty).
\end{equation*}  
The function $\aux:\R\to[-\infty,\infty)$ given by
\begin{equation*}
\aux(\eta)=\E[g(W+\eta )]-\eta
\end{equation*}
is  concave and proper thanks to $g(W + \theta)\in L^1$. We claim $g(W+\eta)^+\in L^1$ for all $\eta\in\R$. For non-monotone $g$ this holds trivially since $g$ is bounded above.  For monotone $g$, one has
$$g(W+\eta)^+\leq g(W+\theta)^{+} + g(\eta-\theta)^{+} + g(\theta-\eta)^{+},\qquad \eta,\theta\in\R,$$ 
and the claim follows.
The function $\aux$ is also upper semicontinuous by Fatou. By concavity and monotone convergence, on the interior of its domain, the one-sided derivatives read
$\aux'_+(\eta) = \E[g'_{+}(W+\eta )]-1$ and $\aux'_-(\eta) = \E[g'_{-}(W+\eta )]-1$. We claim $\aux$ has an optimizer. Indeed, this follows immediately if $\dom g$ is bounded. On unbounded portions of the domain (if any), one has
$\lim_{\eta\uparrow\infty}\aux'_-(\eta)=g'_-(\infty)-1<0$ and $\lim_{\eta\downarrow-\infty}\aux'_+(\eta)=g'_+(-\infty)-1>0$, which yields the claim. Therefore, there is an $\hat\eta\in\R$ such that
\[
\sup_{\eta\in\R}\,\{\E[g(W+\eta)]-\eta\}=\E[g(W+\hat\eta )]-\hat\eta.
\]%

We shall prove 
\begin{equation}\label{eq:240217}
\E[g(\hW)] = \inf_{ Z\in L^{1}, \E[Z]= 1 } \{\E[\hW Z+f(Z)]\},
\end{equation}
where we set $\hW = W + \hat\eta$, which after rearranging yields \eqref{eq:T1}. To this end, one may distinguish four subcases depending on the location of $\hat{\eta}$.
\begin{enumerate}[(A)] 
\item\label{case.A} $\hat{\eta}$ is an interior maximum of $\aux$;
\item\label{case.B} $\hat{\eta}=\inf\dom\aux<\sup\dom\aux$, yielding 
$-\infty<\aux'_{+}(\hat{\eta})\leq 0$;
\item\label{case.C} $\hat{\eta}=\sup\dom\aux>\inf\dom\aux$, yielding 
 $\infty>\aux'_{-}(\hat{\eta})\geq 0$;
\item\label{case.D} $\hat{\eta}$ is the sole element of $\dom \aux$.
\end{enumerate}

The Fenchel inequality \eqref{Fenchel ineq} holds with equality for any $y\in\partial g(x)$ at each point $x$ where $g$ is subdifferentiable; see \cite[Theorem~23.5]{rockafellar.70}. Hence,
\begin{equation}\label{eq:221216}
g(\hW)=\hW Z + f(Z)\qquad \text{for any $Z$ with values in $\partial g(\hW)$.}
\end{equation}%

In case~\ref{case.A}, the optimality condition $\infty> \aux'_-(\hat{\eta})\geq 0\geq \aux'_+(\hat{\eta})> -\infty$ gives
$$ \infty>\E[g'_{-}(\hW)]\geq 1\geq\E[g'_{+}(\hW)]> -\infty.$$
A suitable convex combination of $g'_{+}(\hW)$ and $g'_{-}(\hW)$ thus yields $Z\in L^1$ with values in $\partial g(\hW)$ and $\E[Z]=1$, which together with \eqref{eq:221216} and the Fenchel inequality establishes \eqref{eq:240217} in case~\ref{case.A}.
 
In case~\ref{case.B},  we will show that there is a sequence $(Z_n)_{n\in\N}$  in $L^1$ with $\E[Z_n]=1$ such that
\begin{equation}\label{eq:1}
\E[g(\hW)] = \lim_{n \uparrow \infty}  \E[\hW Z_n +f(Z_n)].
\end{equation}%
Indeed, since $\inf \dom \aux=\hat{\eta}$, we have $g(\hW)\in L^1$ but 
$g(\hW-\frac{1}{2n})\notin L^1$ for each $n\in\N$. Proposition~\ref{P:240424} with $s =1 -\E[g'_+(\hW)] =-p'_+(\hat\eta)\geq0$ 
now yields, for each $n\in\N$, a random variable $Z_n$ with $\E[Z_n]=1$ such that 
\begin{equation*}
0\leq \E[\hW Z_n+f(Z_n)]-\E[g(\hW)]\leq (s+\E[|g'_+(\hW)|])\frac{1}{n}.
\end{equation*}
This completes the proof of part \ref{case.B}. For the economic significance of this construction interpreted as a downward shift in wealth, see Remark~\ref{R:240617}.

The proof of case \ref{case.C} is a mirror image of case \ref{case.B}, i.e., one has $\E[g'_-(\hW)]-1\geq 0$ instead of $1-\E[g'_+(\hW)]\geq 0$. 
One proceeds as in case \ref{case.B} but with Corollary~\ref{C:240624} replacing Proposition ~\ref{P:240424}. Intuitively, one performs an upward shift in wealth to reduce the marginal utility, see Remark~\ref{R:240617}. 

Case \ref{case.D}: Since $\dom p = \{\hat{\eta}\}$, one has $\essinf \hW = \xinf$, $\esssup \hW = \xsup$, $g'_+(-\infty)=\infty$, and $g'_-(\infty)=-\infty$. In view of this, $g(\hW)\in L^1$ yields $\hW\in L^1$. Recalling \eqref{eq:dom g}, fix $\xmid\in (\xinf,\xsup)$ and let 
$$ A = \{\hW<\xmid\}.$$
Note that $\P[A] > 0$.

Consider first the case  $g'_+(\hW)\indicator{A}\in L^1$ and $g'_-(\hW)\indicator{A^c}\in L^1$ and define
$$ s = 1-\E[g'_+(\hW)\indicator{A}]-\E[g'_-(\hW)\indicator{A^c}].$$
For $s\geq 0$, Proposition~\ref{P:240424} with $\varepsilon = \frac{1}{n}$ yields a random variable $Y_n$ such that $\E[Y_n\indicator{A}]=s+\E[g'_+(\hW)\indicator{A}]$ and
$$ 0\leq \E[(\hW Y_n+f(Y_n)-g(\hW))\indicator{A}]\leq (s+\E[|g'_+(\hW)|\indicator{A}])\frac{1}{n}.$$ 
Taking $Z_n = Y_n\indicator{A}+g'_-(\hW)\indicator{A^c}$, one has $\E[Z_n]=1$ and \eqref{eq:1} holds thanks to the Fenchel equality $\hW Z_n +f(Z_n) = g(\hW)$ on $A^c$. For $s<0$, one proceeds analogously with Corollary~\ref{C:240624} replacing Proposition ~\ref{P:240424}.

Suppose now $g'_+(\hW)\indicator{A}\in L^1$ but $g'_-(\hW)\indicator{A^c}\notin L^1$. The latter yields that $g$ is decreasing near $\xsup$. Without loss of generality we may therefore suppose that $g$ is decreasing on $A^c$. Consider a nondecreasing sequence 
$(\xsup_n)_{n \in \N}$ with 
$\xmid<\xsup_n$ and $\lim_{n\uparrow \infty} x_n = \xsup$. 
Since  
$$s_n=  1-\E[g'_+(\hW)\indicator{A}+g'_-(\hW\wedge \xsup_n)\indicator{A^c}] \uparrow\infty, \qquad n\uparrow \infty,$$ 
there is $N\in\N$ such that $s_n>0$ for all $n>N$. For each $n>N$, Proposition~\ref{P:240424} applied on the event $A=\{\hW<\xmid\}$ with $\varepsilon = \frac{1}{n(s_n\vee1)}$ yields a random variable $Y_n$ such that $\E[Y_n\indicator{A}]=s_n+\E[g'_+(\hW)\indicator{A}]$ and
$$ 0\leq \E[(\hW Y_n+f(Y_n)-g(\hW))\indicator{A}]\leq (s_n+\E[|g'_+(\hW)|\indicator{A}])\frac{1}{n(s_n\vee1)}\downarrow 0, \qquad n\uparrow \infty.$$
Taking $Z_n = Y_n\indicator{A}+g'_-(\hW\wedge\xsup_n)\indicator{A^c}$, one has $\E[Z_n]=1$. The Fenchel inequality further yields
\begin{align*}
	0 &\leq \hW Z_n+f(Z_n)-g(\hW) \leq 
(\hW \wedge x_n) Z_n+f(Z_n)-g(\hW)\\
&=g(\hW\wedge\xsup_n) - g(\hW) \downarrow 0, \qquad n\uparrow \infty, \qquad\text{on $A^c$}, 
\end{align*}
hence \eqref{eq:1} holds by dominated convergence. 
The case $g'_+(\hW)\indicator{A}\notin L^1$ and $g'_-(\hW)\indicator{A^c}\in L^1$ is analogous.

Suppose finally $g'_+(\hW)\indicator{A}\notin L^1$ and $g'_-(\hW)\indicator{A^c}\notin L^1$. This yields that $g$ is increasing near $\xinf$ and decreasing near $\xsup$. Without loss of generality we may assume that $g$ is nondecreasing on $A$ and nonincreasing on $A^c$. Taking $(\xsup_n)_{n \in \N}$ as above, the lack of integrability in $g'_+(\hW)\indicator{A}$ yields a nonincreasing sequence $(\xsup_n)_{n \in \N}$  by
$$\xinf_n = \inf_{x\in\R} \left\{\E\left[g'_+(\hW\vee x)\indicator{A}+
g'_-(\hW\wedge\xsup_n)\indicator{A^c}\right]\leq 1\right\}>\xinf$$
with $\lim_{n\uparrow \infty} \xinf_n=\xinf$. Thus 
$$ \E\left[g'_+(\hW\vee \xinf_n)\indicator{A}+g'_-(\hW\wedge\xsup_n)\indicator{A^c}\right]\leq 1 \leq  \E\left[g'_-(\hW\vee \xinf_n)\indicator{A}+g'_-(\hW\wedge\xsup_n)\indicator{A^c}\right]$$
by right-continuity of $g_+'$, for each $n \in \N$. 
Taking a convex combination of the random variables in the previous display, this yields $Z_n\in\partial g(\hW_n)$ with $\hW_n = \xinf_n\vee\hW\wedge\xsup_n$ with $\E[Z_n] =1$ such that
\begin{align*}
	0 &\leq \hW Z_n+f(Z_n)-g(\hW) \leq  \hW_n Z_n + f(Z_n)-g(\hW) \\
&= g(\hW_n) - g(\hW)\downarrow 0, \qquad n\uparrow \infty.
\end{align*}
We conclude again by dominated convergence.
\end{proof}

\begin{corollary}\label{C:240221}
In the setting of Theorem~\ref{T1}, the following are equivalent.
\begin{enumerate}[(i)]
\item\label{C.1} Both sides of \eqref{eq:T1} are finite.
\item\label{C.2} One of the following two conditions holds. 
\begin{enumerate}[(a)] 
\item\label{C.2.a} $1$ is in the interior of $\dom f$ and $g(W+\theta)\in L^1$. 
\item\label{C.2.b} $1\in\dom f$ is on the boundary of $\dom f$, and $W\in L^1$.
\end{enumerate}
\end{enumerate}
\end{corollary}

\begin{proof}
Taking $\eta = \theta$ in  \eqref{eq:T1} shows that $g(W+\theta)\in L^1$ is necessary for \ref{C.1} to hold. Furthermore, by the proof of Theorem~\ref{T1}, for $1$ in the interior of $\dom f$ it is also sufficient. For $1$ in $\dom f$ but not in its interior, the infimum $f(1) + \E[W]$ is finite if and only if $W\in L^1$. For $1\notin\dom f$, the infimum is trivially infinite. 
\end{proof}

\begin{remark}\label{R:240617}
The proof of Theorem~\ref{T1} reveals the following economic insight. In case~\ref{case.B}, we find ourselves in a situation where the expected marginal utility of $\hW$ can be strictly less than 1 as illustrated in Example~\ref{E:1}. The (arbitrarily large) gap in marginal utility is bridged by shifting $\hW$ downwards by at most $\frac{1}{n}$ to obtain a new wealth distribution $\hW_n$ whose marginal utility $\partial g(\hW_n)$ has an element $Z_n$ with expected value of 1. Crucially, the expected utility of $\hW_n$ is within a multiple of  $\frac{1}{n}$ to that of $\hW$. In view of the Fenchel equality $g(\hW_n)  = \hW_n Z_n +f(Z_n)$, one has that $\E[\hW Z_n +f(Z_n)]$ differs from the expected utility  of $\hW$ by no more than a multiple of $\frac{1}{n}$, which gives the desired dual result.
\end{remark}

\begin{example}\label{E:1}
Consider the utility function
$$ g(w) = 1-\e^{\e^{-w}-1}.$$
Fix $\alpha \geq 1$ and consider a probability space rich enough to support a random variable $W$ with the distribution
$$\P_W(\d x) = \e^{-x(1-\alpha)}\e^{-\e^{-x}}\indicator{x\leq 0}\d x +c_{\alpha}\delta_1(\d x).$$
Here $\delta_1$ is the Dirac delta measure concentrated on $1$ and $c_{\alpha} =
1-\int_1^\infty y^{-\alpha}e^{-y}\,\d y
\in(0,1)$ is chosen such that indeed $\P_W[\R]=1$. Observe that $c_\alpha$ is increasing in $\alpha$ with $\lim_{\alpha\uparrow\infty}c_\alpha=1$. Numerically, $c_1\approx 0.78$.
For $\alpha>1$, $\e^{-W}$ has finite exponential moment up to size $1$ but not above, which yields $\E[g(W)]>-\infty$ and $\inf \{\eta\in\R:\E[g(W+\eta )]>-\infty \}=0$.

Since
$$\P_{\e^{-W}}(\d x) =  x^{-\alpha}\e^{-x}\indicator{x\geq 1}\d x +c_{\alpha}\delta_{\e^{-1}}(\d x),$$
we have $g'_+(W)=\e^{\e^{-W}-1-W}\in L^1$ for $\alpha>2$
with  
$$\E[g'_+(W)]=\e^{-1}\left(\frac{1}{\alpha-2}+c_\alpha\e^{\e^{-1}-1}\right)<\frac{2}{5}\left(\frac{1}{\alpha-2}+1\right).$$ 
Hence, for $\alpha>3$ we have $\E[g'_+(W)]<1$ and $\hat\eta = 0$ is a corner solution, i.e.,
\[
\sup_{\eta\in\R}\{\E[g(W+\eta)]-\eta\}=\E[g(W)]. 
\]
\end{example}

The next proposition and corollary assert that in certain circumstances it is possible to manipulate wealth in such a way that the expected utility and wealth change only a little, while the expected marginal utility changes by a discrete amount. 
\begin{proposition}\label{P:240424}
Fix $f$, $g$ as in Theorem~\ref{T1}. Consider $W\in L^0$ and an event $A$ such that $g(W)\indicator{A}\in L^1$ and $g'_+(W)\indicator{A}\in L^1$. Suppose further there is $\varepsilon>0$  such that $g(W-\frac{\varepsilon}{2})\indicator{A}\notin L^1$. Then for any $s\geq 0$, there exists $Z\in L^0$ with $Z\indicator{A}\in L^1$ and  $(WZ+f(Z))\indicator{A}\in L^1$, such that 
$$\E[Z\indicator{A}] = s + \E[g'_+(W)\indicator{A}]$$ 
and
\begin{equation}\label{eq:240424}
0\leq \E[(WZ+f(Z)-g(W))\indicator{A}]\leq (s+\E[|g'_+(W)\indicator{A}|])\varepsilon.
\end{equation} 

Furthermore, there is $\tW\in L^0$ with $0\leq W-\tW\leq \varepsilon\indicator{A}$, such that $Z\in \partial g(\tW)$ on $A$,  $g(\tW)\indicator{A} = (\tW Z + f(Z))\indicator{A} \in L^1$, and
\begin{equation}\label{eq:240423b}
-(s+\E[g'_+(W)^+\indicator{A}])\varepsilon\leq \E[(g(\tW)-g(W))\indicator{A}]\leq \E[g'_+(W)^-\indicator{A}]\varepsilon.
\end{equation}
\end{proposition}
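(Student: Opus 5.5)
The plan is to shift wealth downwards on a carefully chosen subset of $A$ where shifting increases marginal utility a lot, and to set $Z$ equal to a (right) derivative of $g$ at the shifted wealth. Since $g(W-\frac{\varepsilon}{2})\indicator{A}\notin L^1$ while $g(W)\indicator{A}\in L^1$, the function $g$ has $\xinf>-\infty$ with $W$ accumulating near $\xinf$ on $A$, or $g$ is unbounded below on the left. In either case concavity gives $g(W)-g(W-\frac{\varepsilon}{2})\le \frac{\varepsilon}{2}g'_-(W-\frac{\varepsilon}{2})$. The negative part of $g(W-\frac{\varepsilon}{2})$ is non-integrable, so $\E[g'_+(W-\delta)\indicator{A}]=\infty$ for suitable $\delta\le \varepsilon$ (with $g'=+\infty$ outside the domain). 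Hence the family $\tW_B=W-\varepsilon\indicator{B}$, with $B\subseteq A$, can produce arbitrarily large expected marginal utility.

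\textbf{Step 1: construct $\tW$ continuously.} Consider, for $t\in[0,\varepsilon]$, the candidates $\tW_t=W-t\indicator{A}$, or better $\tW_{t,B}=W-t\indicator{B}$ with $B$ ranging over an increasing family of subsets of $A$, such as $B_k=A\cap\{|g'_+(W-\varepsilon)|\le k\}$. Define
$$\phi(t,k)=\E[g'_+(\tW_{t,B_k})\indicator{A}].$$
This map is finite on bounded pieces and tends to $+\infty$ as $k\to\infty$ with $t=\varepsilon$. As $t$ increases it is monotone nondecreasing and right-continuous, with possible jumps. To hit the exact value $s+\E[g'_+(W)\indicator{A}]$ I interpolate: on the jump set I take $Z$ as a convex combination of $g'_+(\tW)$ and $g'_-(\tW)$, which lies in $\partial g(\tW)$, exactly as in case~(A) of the proof of Theorem~\ref{T1}. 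If nonatomicity is lacking, the convex-combination trick on $\partial g$ suffices because only expectations matter. This produces $\tW$ with $0\le W-\tW\le\varepsilon\indicator{A}$ and $Z\in\partial g(\tW)$ on $A$ with $\E[Z\indicator{A}]=s+\E[g'_+(W)\indicator{A}]$. Outside $A$, set $Z=g'_+(W)$ or anything in the domain, since only $A$ matters.

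\textbf{Step 2: the Fenchel equality and bounds.} Since $Z\in\partial g(\tW)$, the Fenchel equality gives $g(\tW)=\tW Z+f(Z)$ on $A$. Therefore
$$WZ+f(Z)-g(W)=(W-\tW)Z+g(\tW)-g(W).$$
The left side is $\ge0$ by the Fenchel inequality. For the upper bound, concavity with $Z\in\partial g(\tW)$ gives $g(W)\ge g(\tW)+Z(W-\tW)$, so the difference is $\le 0$. I must therefore double-check signs: the correct estimate is
$$g(\tW)-g(W)\le -g'_+(W)(W-\tW)$$
by concavity at $W$. Hence
$$0\le WZ+f(Z)-g(W)\le (W-\tW)(Z-g'_+(W))\le\varepsilon\,(Z-g'_+(W))\indicator{A},$$
using $Z\ge g'_+(W)$, which follows from monotonicity of subgradients since $\tW\le W$. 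Taking expectations gives at most $\varepsilon\,s$, which is within \eqref{eq:240424}. Integrability of $Z\indicator{A}$ holds by construction, and integrability of $g(\tW)\indicator{A}$ follows from the sandwich between $g(W)-\varepsilon|g'_+(W)|\cdots$ and these bounds. The same concavity inequalities give \eqref{eq:240423b}. Namely,
$$g(\tW)-g(W)\le g'_+(W)(\tW-W)\le g'_+(W)^-\varepsilon,$$
and
$$g(\tW)-g(W)\ge Z(\tW-W)\ge -\varepsilon Z^+,$$
where $\E[Z^+\indicator{A}]\le s+\E[g'_+(W)^+\indicator{A}]$ follows by splitting the expectation.

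\textbf{Main obstacle.} The hard step is Step~1: proving that the attainable expected marginal utility really reaches every level in $[\E[g'_+(W)\indicator{A}],\infty)$ while the shift stays bounded by $\varepsilon$. This requires deducing $\E[g'_+(W-\varepsilon)\indicator{A}]=\infty$ from the non-integrability of $g(W-\frac{\varepsilon}{2})\indicator{A}$. The deduction uses $g(W)-g(W-\frac{\varepsilon}{2})\le\frac{\varepsilon}{2}g'_+(W-\varepsilon)$ together with $g(W)\indicator{A}\in L^1$, where $g'_+=\infty$ off the domain. The step also requires an intermediate-value argument: monotone convergence over the truncation level $k$, continuity in between, and subdifferential convex combinations to fill jumps.
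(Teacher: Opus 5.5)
\textbf{There is a genuine gap in Step~1, at a finite left endpoint $\xinf=\inf\dom g$.} Your Step~2 is fine. In fact, your estimate $WZ+f(Z)-g(W)\le (W-\tW)(Z-g'_+(W))$ gives the sharper bound $\varepsilon s$.

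The problem is your truncation $B_k=A\cap\{|g'_+(W-\varepsilon)|\le k\}$. It discards exactly the part of $A$ where $W-\varepsilon$ leaves $\dom g$, or lands on $\xinf$ with infinite slope. That part may carry all of the non-integrability of $g(W-\frac{\varepsilon}{2})\indicator{A}$. In that case $\phi(\varepsilon,k)$ stays bounded, and your claim $\phi(\varepsilon,k)\to\infty$ fails. You also cannot simply include that part in $B$. There $W-t$ leaves $\dom g$, so $g(\tW)=-\infty$ and $\partial g(\tW)=\emptyset$.

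Here is a concrete counterexample. Take $g=\id$ on $[0,\infty)$ and $g=-\infty$ on $(-\infty,0)$, i.e.\ $f=\infty\indicator{\id<1}$. Take $A=\Omega$, $W$ uniform on $(0,1)$, and $\varepsilon=\frac12$. All hypotheses hold, and $B_k=\{W\ge\frac12\}$ for every $k\ge 1$, so $\phi\equiv 1$. More decisively, any $\tW=W-t\indicator{B}$ with values in $\dom g$ satisfies $\P[\tW=0]=0$. Hence $\partial g(\tW)=\{1\}$ almost surely, every admissible $Z$ has $\E[Z]=1$, and the target level $1+s$ with $s>0$ is out of reach for your whole family of shifts.

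\textbf{The missing idea is to cap the downward shift at the boundary and exploit that $\partial g(\xinf)=[g'_+(\xinf),\infty)$ is unbounded above.}
\begin{itemize}
\item The paper uses the threshold family $W\wedge(K\vee(W-\varepsilon))$. It takes $\tK$ to be the infimum of the $K$ at which the expected right derivative falls to the target.
\item In case~($\alpha$), $\tK>\xinf$, and your convex-combination interpolation between $g'_+(\tW)$ and $g'_-(\tW)$ finishes the job.
\item In case~($\beta$), $\tK=\xinf>-\infty$. Then the event $\{\tW=\xinf\}\cap A$ has positive probability, and the missing mass of $\E[Z\indicator{A}]$ is placed there as an extra constant multiple of its indicator.
\end{itemize}
In the example this gives $\tW=(W-\frac12)^+$ and $Z=1+2s\indicator{\{W\le 1/2\}}$.

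A smaller slip: $t\mapsto\phi(t,k)$ is left-continuous, not right-continuous. As $t\downarrow t_0$ one has $g'_+(W-t)\to g'_-(W-t_0)$. Your convex-combination fix for the jumps still works once the direction is corrected.
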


Before proving the proposition, we state and briefly justify a corollary that follows by symmetry.  

\begin{corollary} \label{C:240624}
Fix $f$, $g$ as in Theorem~\ref{T1}. Consider $W\in L^0$ and an event $A$ such that  $g(W) \indicator{A} \in L^1$ and $g'_-(W)\indicator{A} \in L^1$. Suppose further there is $\varepsilon>0$  such that $g(W+\frac{\varepsilon}{2})\indicator{A}\notin L^1$. Then for any $s\geq0$, there exists $Z\in L^0$ with $Z\indicator{A} \in L^1$ and $(WZ+f(Z))\indicator{A}\in L^1$ such that 
$$\E[Z\indicator{A}] =  \E[g'_-(W)\indicator{A}] - s,$$ 
and $0\leq \E[(WZ+f(Z)-g(W))\indicator{A}]\leq (s+\E[|g'_-(W)|\indicator{A}])\varepsilon$.
\end{corollary}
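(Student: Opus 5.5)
We will deduce Corollary~\ref{C:240624} from Proposition~\ref{P:240424} by the reflection $x\mapsto -x$. Define $\check g(x) = g(-x)$ and $\check f(y) = f(-y)$. Then $\check f$ is convex and lower semicontinuous. Moreover,
\[
\inf_{y}\{xy+\check f(y)\} = \inf_y\{(-x)(-y)+f(-y)\} = g(-x) = \check g(x),
\]
so $(\check f,\check g)$ is again a conjugate pair as in Theorem~\ref{T1}. The one-sided derivatives transform as $\check g'_+(x) = -g'_-(-x)$. Set $\check W = -W$. The hypotheses of the corollary then translate directly into those of the proposition:
\begin{itemize}
\item $\check g(\check W)\indicator{A} = g(W)\indicator{A}\in L^1$;
\item $\check g'_+(\check W)\indicator{A} = -g'_-(W)\indicator{A}\in L^1$;
\item $\check g(\check W - \tfrac{\varepsilon}{2})\indicator{A} = g(W+\tfrac{\varepsilon}{2})\indicator{A}\notin L^1$.
\end{itemize}

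Apply Proposition~\ref{P:240424} to $(\check f,\check g,\check W,A,\varepsilon,s)$. It yields $\check Z$ with $\check Z\indicator{A}\in L^1$ and $(\check W\check Z+\check f(\check Z))\indicator{A}\in L^1$. Its mean is
\[
\E[\check Z\indicator{A}] = s+\E[\check g'_+(\check W)\indicator{A}] = s-\E[g'_-(W)\indicator{A}],
\]
and it satisfies
\[
0\le \E[(\check W\check Z+\check f(\check Z)-\check g(\check W))\indicator{A}] \le (s+\E[|g'_-(W)|\indicator{A}])\varepsilon .
\]

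Now put $Z=-\check Z$. Then $\check W\check Z = WZ$, $\check f(\check Z) = f(Z)$ and $\check g(\check W) = g(W)$. Hence $(WZ+f(Z))\indicator{A}\in L^1$, $Z\indicator{A}\in L^1$, and
\[
\E[Z\indicator{A}] = \E[g'_-(W)\indicator{A}]-s,
\]
with exactly the claimed two-sided bound.

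The only step needing care is the derivative identity $\check g'_+(x) = -g'_-(-x)$ and its effect on the absolute value in the error bound. Both follow immediately from the definitions of one-sided derivatives, so there is no real obstacle. If desired, one also obtains the analogue of $\tW$ with $0\le \tW-W\le\varepsilon\indicator{A}$, which is the upward shift mentioned in Remark~\ref{R:240617}.
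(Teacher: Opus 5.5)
Your proof is correct and follows the paper's own argument: reflect via $g(-\id)$, $f(-\id)$, $-W$, apply Proposition~\ref{P:240424}, and negate the resulting $Z$ (and $\tW$). You also correctly verify the conjugacy of the reflected pair, the identity $\check g'_+(\check W) = -g'_-(W)$, and how the hypotheses, the mean constraint and the error bound transform.
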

\begin{proof}
	Define the concave function $\mathring{g} = g(-\id)$ and the random variable $\mathring{W} = -W \in L^0$. Then the function $\mathring{f} = f(-\id)$ stands in  conjugate relationship with $\mathring{g}$.  Moreover, we have $\mathring{g}(\mathring{W}) = g(W)$ and $\mathring{g}'_+(\mathring{W}) = -g'_-(W)$.
	Applying now Proposition~\ref{P:240424} with $g,f$, and $W$ replaced by $\mathring{g}, \mathring{f}$, and $\mathring{W}$, respectively, yields some $\mathring{\tilde{Z}}$ and $\mathring{\widetilde{W}}$  with the stated properties.  Defining $\tilde{Z} = -\mathring{\tilde{Z}}$ and $\tW = -\mathring{\tW}$ then yields the corollary.
\end{proof}

\begin{proof}[Proof of Proposition~\ref{P:240424}]
Concavity of $g$ and $g'_+(W)\indicator{A}\in L^1$ yield $g'_+(W-\varepsilon)^-\indicator{A}\in L^1$ and 
\[
	g\left(W-\frac{\varepsilon}{2}\right)^+\indicator{A} \leq \left(g(W) - \frac{\varepsilon}{2} g^\prime_+(W)\right)^+\indicator{A} \leq |g(W)|\indicator{A} + \frac{\varepsilon}{2}  |g^\prime_+(W)|\indicator{A}  \in L^1.
\]
Using $g(W-\frac{\varepsilon}{2})\indicator{A} \notin L^1$ and concavity again we then have  
$$-\infty = \E\left[g\left(W-\frac{\varepsilon}{2}\right)\indicator{A}\right] \geq 
\E\left[g(W)\indicator{A}- \frac{\varepsilon}{2}g'_+\left(W-\frac{\varepsilon}{2}\right)\indicator{A} \right],$$ 
which yields $\E[g'_+(W-\frac{\varepsilon}{2})\indicator{A}]=\infty$, hence also $\E[g'_+(W-\varepsilon)\indicator{A}]=\infty$.
It is now also clear that $g(W-\varepsilon)^+ \indicator{A} \in L^1$ and $\E[g(W-\varepsilon)\indicator{A}] = -\infty$. 

 Define now
$$\tK = \inf \left\{K\in\R : \E\left[g'_+\left(W\indicator{\{W<K\}}
+K\vee (W-\varepsilon)\indicator{\{K\leq W\}}\right)\indicator{A}\right]\leq s + \E[g'_+(W)\indicator{A}]\right\}.$$ 
Since 
$W\indicator{\{W<K\}}+K\vee (W-\varepsilon)\indicator{\{K\leq W\}}\ \downarrow\ W-\varepsilon$ for $K\downarrow-\infty$,
monotone convergence and $\E[g'_+(W-\varepsilon)\indicator{A}]=\infty$ in turn yield that $\tK$ is finite.
Moreover, note that the function 
\begin{equation}\label{eq:240609}
K\mapsto \E\left[g'_+\left(W\indicator{\{W<K\}}
+K\vee (W-\varepsilon)\indicator{\{K\leq W\}}\right)\indicator{A}\right]
\end{equation} 
is right-continuous (because $g'_+$ is right-continuous).

Observe that 
$$ \tW=\left(W\indicator{\{W<\tK\}}+\tK\vee (W-\varepsilon)\indicator{\{\tK\leq W\}}\right)\indicator{A} + W\indicator{A^c} $$
satisfies $0\leq W-\tW\leq \varepsilon\indicator{A}$. The right-continuity of the function in \eqref{eq:240609} yields 
$$\E[g'_+(\tW)\indicator{\{\tK\leq W\}}\indicator{A}] \leq s + \E[g'_+(W)\indicator{\{\tK\leq W\}}\indicator{A}]\leq\E[g'_-(\tW)\indicator{\{\tK\leq W\}}\indicator{A}].$$ 
 We shall now discuss two subcases: 
\begin{enumerate*}
\item[\optionaldesc{($\alpha$)}{case.alpha}] $\tK > \xinf$; and 
\item[\optionaldesc{($\beta$)}{case.beta}] $\tK = \xinf>-\infty$,
\end{enumerate*}
 where $\xinf =\inf\dom g$.

\noindent \ref{case.alpha}\quad Since $g'_-$ is bounded above on $[\tK,\infty)$ and since \(g'_+(W)1_A \in L^1\), one has $g'_-(\tW)\indicator{\{\tK\leq W\}}\indicator{A}\in L^1$.   A suitable convex combination of $g'_+(\tW)\indicator{A}$ and 
$\left(g'_+(\tW)\indicator{\{W<\tK\}}+g'_-(\tW)\indicator{\{\tK\leq W\}}\right)\indicator{A}$ now yields  $Z\in L^0$ with values in $\partial g(\tW)$ on $A$ that satisfies $\E[Z\indicator{A}]= s + \E[g'_+(W)\indicator{A}]$.  

\noindent \ref{case.beta}\quad Here $\xinf=\inf\dom g>-\infty$. This and $\essinf_A ( W-\frac{\varepsilon}{2})\leq\xinf$ implies $\E[\indicator{\{\tW =\xinf\}\cap A}]>0$.  Since $g'_+(\tW) \indicator{A}\in L^1$, we have $g'_+(\xinf)$ is finite, hence 
$\partial g(\xinf) = [g'_+(\xinf),\infty)$. 
We now let
\begin{align*} 
Z ={}& g'_+(\tW)\indicator{A}+ \frac{s+\E[g'_+(W) \indicator{A}] - \E[g'_+(\tW) \indicator{A}]}{\P[\{\tW =\xinf\}\cap A]}\indicator{\{\tW=\xinf\}\cap A},
\end{align*}
which yields again $Z\in L^0$ with values in $\partial g(\tW)$ on $A$, such that $Z\indicator{A}\in L^1$ and  $\E[Z\indicator{A}]=s+\E[g'_+(W)\indicator{A}]$.   

The Fenchel inequality \eqref{Fenchel ineq} holds with equality for any $y\in\partial g(x)$ at each point $x$ where $g$ is subdifferentiable; see \cite[Theorem~23.5]{rockafellar.70}. Hence, the random variables $\tW$ and $Z$ constructed in \ref{case.alpha} and \ref{case.beta} satisfy  $g(\tW)=\tW Z + f(Z)$ on $A$. This yields
\begin{equation}\label{eq:240423c}
\begin{split}
 g(W)-\varepsilon Z^+ &{} \leq g(W)-(W-\tW)Z\\
&{}\leq W Z + f(Z)-(W-\tW)Z\\ 
&{} = \tW Z + f(Z)=g(\tW)\qquad\qquad\qquad\text{on $A$},
\end{split}
\end{equation} 
where the second line follows from the Fenchel inequality.
 
Concavity of $g$ gives
\begin{equation}\label{eq:240425b}
g(\tW) \leq g(W)+\varepsilon g'_+(W)^- \qquad\qquad\qquad\text{on $A$},
\end{equation} 
which, together with \eqref{eq:240423c}, yields $g(\tW)\indicator{A}\in L^1$. Observing that $Z^-\leq g'_+(W)^-$ on $A$ and 
$$s + \E[g'_+(W)\indicator{A}] = \E[Z^+\indicator{A}] - \E[Z^-\indicator{A}],$$ 
we obtain 
\begin{equation}\label{eq:240501}
 \E[Z^+\indicator{A}]\leq s + \E[g'_+(W)^+\indicator{A}],
\end{equation}
which yields \eqref{eq:240423b} on taking expectations in \eqref{eq:240423c} and \eqref{eq:240425b}. Fenchel inequality,  \eqref{eq:240423c}, and \eqref{eq:240425b} also give
\begin{equation*}
\begin{split}
 g(W) &{}\leq W Z + f(Z)\\ 
          &{} = g(\tW) + (W-\tW) Z\\
					&{} \leq g(W) + (W-\tW) Z + \varepsilon g'_+(W)^-\qquad\qquad\qquad\text{on $A$},
\end{split}
\end{equation*} 
which yields \eqref{eq:240424} in view of \eqref{eq:240501}.
\end{proof}
\section{Consequences of the cash-invariant representation}\label{S:3}
%
\subsection{Divergence as a measure of complete market investment opportunities.}\label{SS:3.1}
Fix a convex, lower semicontinuous function $f$ as in Theorem~\ref{T1}, with the additional property that $f\indicator{\id<0}=\infty\indicator{\id<0}$. The utility function $g$, obtained as the concave conjugate of $-f$, is then non-decreasing.
Denote the lower bound of the effective domain of $g$ by $\xinf = \inf\dom g$ and the bliss point of $g$ (the smallest argument where $g$ attains its supremum) by $\bliss=\inf \{x : g(x)=g(\infty )\}$.

Let $\Q$ be another probability measure, which we will restrict below to the natural case $\Q\ll\P$. The quantity  
$$u_\Q(x)=\sup_{{X\in L^1(\Q),\, \E_{\Q}[X]\leq0}}\{\E[g(x + X)]\},\qquad x\in\R,$$
measures the maximal expected utility under $\P$ in a statically complete market with pricing measure $\Q$ for an agent with initial wealth $x$. Observe that, trivially, $u_\Q(x)=-\infty$ for $x\notin\dom g$. The next proposition summarizes known results about $u_\Q$.
\begin{proposition} \label{P:Cierne231117}
For $f$ and $g$ as above and $\Q\ll\P$, the following are equivalent.
\begin{enumerate}[(i)]
\item\label{P:Cierne231117.i} $u_\Q(x)<g(\infty)$ for some $\xinf<x<\bliss$.
\item\label{P:Cierne231117.ii} $u_\Q(x)<g(\infty)$ for all $\xinf<x<\bliss$.
\item\label{P:Cierne231117.iii} There is $y>0$ such that $f(y\frac{\d\Q}{\d\P})\in L^1$. 
\end{enumerate} 
Furthermore, if any of the equivalent conditions \ref{P:Cierne231117.i}--\ref{P:Cierne231117.iii} holds, then 
\[u_\Q(x) = \min_{y>0} \left\{xy+\E\left[f\left(y\frac{\d \Q}{\d\P}\right)\right]\right\}, \qquad \xinf<x<\bliss.\]
\end{proposition}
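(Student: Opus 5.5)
Our strategy is to obtain $u_\Q$ from the duality of Theorem~\ref{T1}, applied after a change of measure. For $x\in\R$ and $y>0$, the Fenchel inequality \eqref{Fenchel ineq} evaluated at $y\frac{\d\Q}{\d\P}$ and integrated gives, for every admissible $X$ with $\E_\Q[X]\le0$,
\[
\E[g(x+X)]\le \E\Big[(x+X)y\tfrac{\d\Q}{\d\P}+f\big(y\tfrac{\d\Q}{\d\P}\big)\Big]\le xy+\E\big[f\big(y\tfrac{\d\Q}{\d\P}\big)\big].
\]
The middle expectation is well defined because its negative part is controlled through $g(x+X)^-$ or is trivially integrable; we handle the case $\E[g(x+X)]=-\infty$ separately. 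Hence $u_\Q(x)\le\inf_{y>0}\{xy+\E[f(y\frac{\d\Q}{\d\P})]\}=:v(x)$. This gives \ref{P:Cierne231117.iii}$\Rightarrow$\ref{P:Cierne231117.ii}: if $f(y_0\frac{\d\Q}{\d\P})\in L^1$, we pick $x<\bliss$ and use that $v$ is convex and finite at some point. Since $f\ge -g(\infty)\cdot$ slope-type bounds, $f(y)\ge g(\bliss)-\bliss y$, so $\E[f(y\frac{\d\Q}{\d\P})]\ge g(\infty)-\bliss y$. Then $xy+\E[f]\ge g(\infty)+(x-\bliss)y$, and to get strict inequality we need $y$ bounded away from $0$ along a minimizing sequence. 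Near $y\downarrow0$, lower semicontinuity of $f$ and Fatou give $\liminf\E[f(y\frac{\d\Q}{\d\P})]\ge f(0)=g(\infty)$ when $g(\infty)$ is finite. Strict convexity-type slack is supplied by the term $(x-\bliss)y$ only for $y>0$, so we need care; I would use convexity of $\phi(y)=\E[f(y\frac{\d\Q}{\d\P})]$ on $[0,y_0]$ with $\phi(0)=f(0)=g(\infty)$ and $\phi(y_0)<\infty$. Then $xy+\phi(y)$ at a small $y$ is strictly below $g(\infty)$ because the right derivative of $\phi$ at $0$ is $\le$ the slope estimate, which is $-\bliss$ by the identity $f'_+(0)=-\bliss$ (conjugacy). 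This shows $v(x)<g(\infty)$ for $x<\bliss$. The implication \ref{P:Cierne231117.ii}$\Rightarrow$\ref{P:Cierne231117.i} is trivial.

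For \ref{P:Cierne231117.i}$\Rightarrow$\ref{P:Cierne231117.iii} we argue by contraposition. Suppose $f(y\frac{\d\Q}{\d\P})\notin L^1$ for every $y>0$, so $\phi(y)=\infty$, since $f$ is bounded below on compacts of $[0,\infty)$ in the relevant direction. We then construct, for $x\in(\xinf,\bliss)$, wealths $x+X_n$ with $\E_\Q[X_n]\le0$ and $\E[g(x+X_n)]\to g(\infty)$. This is the classical situation where the set where $\frac{\d\Q}{\d\P}$ is small (or vanishes) is cheap under $\Q$ but carries almost all $\P$-mass. The natural choice is $X_n=(\bliss-x)$ on $\{\frac{\d\Q}{\d\P}\le k_n\}$ and a large negative value, but not below $\xinf-x$, on the complement, balanced so that $\E_\Q[X_n]=0$. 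Non-integrability of $f(y\frac{\d\Q}{\d\P})$ for all $y$ means the upper tail of $\frac{\d\Q}{\d\P}$ is heavy relative to the growth of $f$. Via conjugacy, this should let the losses on the thin tail event have vanishing $\P$-expected disutility. This is the step I expect to be the main obstacle. A cleaner route is to invoke Theorem~\ref{T1} under $\Q$: writing $\E[g(x+X)]=\E_\Q[\frac{\d\P}{\d\Q}g(x+X)]$ exhibits the problem as a divergence-type dual, and the known results cited (e.g.\ Kramkov--Schachermayer style arguments) give $u_\Q=v$ whenever $u_\Q<g(\infty)$. Thus if $v\equiv g(\infty)$ on $(\xinf,\bliss)$, then $u_\Q$ cannot be below $g(\infty)$.

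Finally, under \ref{P:Cierne231117.iii} we need equality $u_\Q=v$ and attainment of the minimum. Attainment follows from convexity and lower semicontinuity (Fatou) of $y\mapsto xy+\phi(y)$ on $(0,\infty)$. Coercivity as $y\to\infty$ holds because $x>\xinf$ and $f(y)\ge g(x')-x'y$ for $x'\in(\xinf,x)$. As $y\to0$, the strict inequality above keeps minimizers off $0$. For the reverse inequality $u_\Q\ge v$, we take the optimal $\hat y$ and set $\hat X=-x+Z$ with $Z\in\partial f$-inverse, i.e.\ $x+\hat X\in -\partial f(\hat y\frac{\d\Q}{\d\P})$. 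The first-order condition at $\hat y$ gives $\E_\Q[\hat X]=0$, and Fenchel equality gives $\E[g(x+\hat X)]=v(x)$. If the first-order condition holds only up to one-sided derivatives, we use the same convex-combination and approximation device as in cases~\ref{case.A}--\ref{case.C} of the proof of Theorem~\ref{T1}.
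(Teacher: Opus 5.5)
Your weak-duality bound $u_\Q\le v$, the implication (iii)$\Rightarrow$(ii) via $\phi(0)=f(0)=g(\infty)$ and $\phi'_+(0)=f'_+(0)=-\bliss$ (with $D=\frac{\d\Q}{\d\P}$ and $\phi(y)=\E[f(yD)]$), and the existence of a minimiser in $(0,\infty)$ are essentially fine. The genuine gap is (i)$\Rightarrow$(iii). You must show that $\phi\equiv\infty$ on $(0,\infty)$ forces $u_\Q\equiv g(\infty)$ on $(\xinf,\bliss)$, and neither of your routes does this.

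\emph{Your explicit construction fails in both regimes.}
If $\xinf>-\infty$, then $g(\infty)<\infty$ would give $f(y)\le g(\infty)-\xinf y$, hence $\phi<\infty$. So failure of (iii) forces $g(\infty)=\infty=\bliss$. Then the prescription ``$\bliss-x$ on $\{D\le k_n\}$'' is not a random variable. Moreover, losses bounded below by $\xinf-x$ release at most $(x-\xinf)\Q[D>k_n]\to0$ of budget; what is needed in this regime is large gains on the cheap set.
If $\xinf=-\infty$, a single flat loss level on $\{D>k_n\}$ is too crude. Take $g=g_{\MMV}$ and $\P[D>t]=at^{-2}$ for large $t$, so $D\notin L^2$ and $\phi\equiv\infty$. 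Balancing the budget forces a loss $L\approx(1-x)k/(2a)$, and the utility shortfall $\approx\tfrac12L^2\P[D>k]\to(1-x)^2/(8a)$ does not vanish.

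\emph{The ``cleaner route'' is also unavailable.} Theorem~\ref{T1} optimises over deterministic shifts $\eta$ against densities with $\E[Z]=1$, not over budget-feasible $X\in L^1(\Q)$. Also, $\frac{\d\P}{\d\Q}$ need not exist, since only $\Q\ll\P$ is assumed. Finally, ``$u_\Q=v$ whenever $u_\Q<g(\infty)$'' is exactly the Biagini--\v{C}ern\'y result that the paper cites as its entire proof, so appealing to it is circular.

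\emph{The missing ingredient} is the reverse conjugacy inequality
\[
\phi(y)\le\sup_{x\in\R}\{u_\Q(x)-xy\},\qquad y>0.
\]
To obtain it, fix $y$ and let $I$ be a measurable, possibly infinite, maximiser of $w\mapsto g(w)-wyD$. Clip it to $W_n=(I\vee a_n)\wedge b_n$ with $[a_n,b_n]$ inside the interior of $\dom g$, $a_n\downarrow\xinf$, $b_n\uparrow\infty$. By concavity, $g(W_n)-W_nyD\uparrow f(yD)$. Monotone convergence together with $\E[g(W_n)]\le u_\Q(\E_\Q[W_n])$ then gives the inequality.

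Next, use that $u_\Q$ is concave, nondecreasing, and satisfies $g\le u_\Q\le g(\infty)$. At $x_0\in(\xinf,\bliss)$ with $u_\Q(x_0)<g(\infty)$, $u_\Q$ is finite near $x_0$ and has a supergradient $y_0$. This $y_0$ must be strictly positive, because $u_\Q(x)\ge g(x)\to g(\infty)>u_\Q(x_0)$. Hence $\phi(y_0)\le u_\Q(x_0)-x_0y_0<\infty$, which is (iii).

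The same inequality, applied at $y<\hat y$, is also what your final step needs in the corner case. There the minimiser is $\hat y=\min\dom\phi$, and the first-order condition only gives $\E_\Q[I(\hat yD)]\le x$. Fenchel equality then yields $\hat y\,\E_\Q[I(\hat yD)]+\phi(\hat y)$, possibly strictly below $v(x)$. Mixing $I(\hat yD)$ with near-optimal wealths for slightly smaller $y$ closes this gap. Your pointer to cases (A)--(C) of Theorem~\ref{T1} does not by itself supply that argument.
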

\begin{proof}
See \textcite[Lemma~4.3 and Proposition~4.6]{biagini.cerny.11}. 
\end{proof}

We shall now perform an analogous exercise for the divergence utility
\begin{equation*}
V(X) = \newinf_{{ Z\in L_{+}^{1},\, \E[Z]= 1}} \{\E[XZ+f(Z)]\}.
\end{equation*}
The next result is new as far as we know.

\begin{proposition}\label{P:Cierne231118}
For a convex, lower semicontinuous function $f:\R\to(-\infty,\infty]$ such that $f\indicator{\id<0}=\infty\indicator{\id<0}$ and $\Q\ll\P$, one has 
$$\sup_{{X\in L^1(\Q),\, \E_{\Q}[X]\leq0}}\ \{V(x+X)\} = x + \E\left[f\left(\frac{\d\Q}{\d\P}\right)\right],\qquad x\in\R.$$
\end{proposition}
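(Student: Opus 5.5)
The plan is to prove the two inequalities separately. Write $Z_\Q = \frac{\d\Q}{\d\P}\in L^1_+$, which satisfies $\E[Z_\Q]=1$.

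\emph{Upper bound.} Fix $X\in L^1(\Q)$ with $\E_\Q[X]\le 0$. Then $XZ_\Q\in L^1(\P)$, so $Z_\Q$ is admissible in the infimum defining $V$, and
\[
V(x+X)\le \E[(x+X)Z_\Q+f(Z_\Q)] = x+\E_\Q[X]+\E[f(Z_\Q)]\le x+\E[f(Z_\Q)].
\]
Theorem~\ref{T1} guarantees that $(WZ+f(Z))^-$ is integrable, so the expectation splits as written. If $\E[f(Z_\Q)]=\infty$ the bound is trivial.

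\emph{Lower bound.} I will avoid constructing a dual optimiser and use only the easy direction of \eqref{eq:T1}. For any $W$ and $\eta\in\R$, the Fenchel inequality \eqref{Fenchel ineq} gives $g(W+\eta)\le (W+\eta)Z+f(Z)$. Since $f=\infty$ on the negative half-line, the infimum defining $V$ over $L^1_+$ equals the infimum over $L^1$. Taking expectations and then the infimum over $Z$ yields
\[
V(W)\ge \E[g(W+\eta)]-\eta .
\]
Now fix any $U\in L^\infty$ with $\E[g(U)]>-\infty$, and put $X=U-\E_\Q[U]\in L^\infty\subset L^1(\Q)$, so that $\E_\Q[X]=0$. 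Apply the inequality above with $W=x+X$ and $\eta=\E_\Q[U]-x$. This gives
\[
V(x+X)\ge \E[g(U)]-\E_\Q[U]+x = x+\E[g(U)-UZ_\Q].
\]
It therefore suffices to show
\[
\sup_{U\in L^\infty}\E[g(U)-UZ_\Q]=\E[f(Z_\Q)].
\]
This is the pointwise identity $f(y)=\sup_{u}\{g(u)-uy\}$, which is Fenchel--Moreau for the closed convex $f$, integrated over $\omega$.

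\emph{Main obstacle: the interchange of $\sup$ and $\E$.} The inequality ``$\le$'' is immediate from Fenchel. For ``$\ge$'' I would first try to cite Rockafellar's interchange theorem for normal integrands on the decomposable space $L^\infty$. The integrand $(\omega,u)\mapsto g(u)-uZ_\Q(\omega)$ is a normal integrand because $g$ is upper semicontinuous and $Z_\Q$ is measurable. Its applicability requires one $U_0\in L^\infty$ with $\E[g(U_0)-U_0Z_\Q]>-\infty$; a constant $U_0\in\dom g$ works.

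If a self-contained argument is preferred, I would proceed as follows.
\begin{enumerate}[(1)]
\item For each rational $u$ in $\dom g$, the function $\omega\mapsto g(u)-uZ_\Q(\omega)$ is measurable.
\item By upper semicontinuity and concavity of $g$, the pointwise supremum over rational $u$ (together with endpoints of $\dom g$) recovers $f(Z_\Q)$.
\item Enumerate the candidates $u_1,u_2,\ldots$ and take $U_n$ to be a measurable choice attaining $\max_{k\le n}\{g(u_k)-u_kZ_\Q\}$. This is bounded because it takes finitely many values, and $g(U_n)-U_nZ_\Q\uparrow f(Z_\Q)$.
\item This sequence is bounded below by the integrable term $g(u_1)-u_1Z_\Q$, so monotone convergence gives the limit, including the case $\E[f(Z_\Q)]=\infty$.
\end{enumerate}
Combining the two bounds proves the proposition for every $x\in\R$.
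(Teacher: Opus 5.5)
Your proof is correct, and it takes a genuinely different and more elementary route than the paper. The paper first applies Theorem~\ref{T1} (strong duality) to rewrite $\sup_X V(x+X)$ as $\sup_\eta\{u_\Q(\eta)-\eta\}$. It then imports the complete-market duality $u_\Q(x)=\inf_{y\ge0}\{xy+v_\Q(y)\}$ from Proposition~\ref{P:Cierne231117}, extends it below $\xinf$, and splits into three cases according to $\dom v_\Q$. It finishes with Fenchel--Moreau for the convex lower semicontinuous function $v_\Q(y)=\E[f(y\,\d\Q/\d\P)]$ at $y=1$.

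You use only weak duality, in both directions. For the upper bound you take your $Z_\Q$ as a dual candidate. For the lower bound you use the Fenchel inequality with the shift $X=U-\E_\Q[U]$. This reduces everything to the interchange $\sup_{U\in L^\infty}\E[g(U)-UZ_\Q]=\E[f(Z_\Q)]$. Your countable-max selection, with monotone convergence bounded below by the integrable $g(u_1)-u_1Z_\Q$, proves that interchange cleanly, and adding the endpoints of $\dom g$ is harmless.

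What your route buys:
\begin{enumerate}[(i)]
\item It is self-contained: no Theorem~\ref{T1}, no external result, no case distinction.
\item It shows that the single dual candidate $Z_\Q$ already gives the sharp bound.
\item The supremum is approached by finitely valued $X$ with $\E_\Q[X]=0$, so the identity also holds with $L^1(\Q)$ replaced by $L^\infty$.
\end{enumerate}
What the paper's route buys is the intermediate identity $\sup_x\{u_\Q(x)-x\}=v_\Q(1)$. This exhibits the divergence value as the cash-invariant hull of the complete-market indirect utility, which is the point of that subsection and what Remark~\ref{R:Cierne231118} builds on.

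One small correction concerns the upper bound, where you cite Theorem~\ref{T1} for integrability. Its hypothesis, that some $\theta$ has $g(x+X+\theta)^-\in L^1$, need not hold for an arbitrary $X\in L^1(\Q)$. Justify the splitting directly instead:
\begin{enumerate}[(i)]
\item $(x+X)Z_\Q\in L^1(\P)$ because $X\in L^1(\Q)$;
\item $f(Z_\Q)^-\in L^1$ because a proper convex lower semicontinuous $f$ has an affine minorant (the case $f\equiv\infty$ is trivial).
\end{enumerate}
If one instead reads $V$ on unbounded arguments as the cash-invariant hull, the pointwise bound $g(x+X+\eta)\le(x+X+\eta)Z_\Q+f(Z_\Q)$ gives the same upper bound. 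So the conclusion is unaffected either way.
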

\begin{proof}
Theorem~\ref{T1} gives 
\begin{equation*}
\sup_{{X\in L^1(\Q),\, \E_{\Q}[X]\leq0}}\ \{V(X)\}  = \sup_{x\in\R}\,\{u_\Q(x)-x\}.
\end{equation*}
Since divergence preferences are translation-invariant, it suffices to show
\begin{equation}\label{eq:Cierne231122}
\sup_{x\in\R}\,\{u_\Q(x)-x\} = \E\left[f\left(\frac{\d\Q}{\d\P}\right)\right].
\end{equation}
 
If $f=\infty$, then $g=\infty$ and the equality holds trivially. Suppose now $f$ is proper. If $\dom f$ is a singleton, say $\{c\}$, then $g(x) = cx+f(c)$ for all $x\in\R$. One easily verifies that both sides are $\infty$ unless $c=1$ and $\Q=\P$, in which case \eqref{eq:Cierne231122} holds trivially. Finally, consider the case where $\dom f$ has non-empty interior. Then there are constants $c,k\in\R$ such that $f-f(c)-k(\id-c)$ is non-negative. Consequently, the function $v_\Q:\R\to(-\infty,\infty]$ given by 
$$v_\Q(y)= \E\left[f\left(y\frac{\d\Q}{\d\P}\right)\right]$$
is well-defined, convex, and lower semicontinuous by Fatou. The proof will be complete if we can show that $\sup_{x\in\R}\{u_\Q(x)-x\}=v_\Q(1)$.  

We shall distinguish three cases based on the effective domain of $v_\Q$. Observe that $g(\infty)$ and $\xinf$ finite yield $f(y)\leq g(\infty)-\xinf y$, $y\geq 0$, which implies $\dom v_\Q = [0,\infty)$.

\emph{Case 1:} Suppose first $\dom v_\Q=\{0\}$, which implies $g(\infty) =  f(0)$ is finite and $\xinf=-\infty$. Proposition~\ref{P:Cierne231117} now yields $u_\Q(x)=g(\infty)$, $x\leq \bliss$, hence $\sup_{x\in\R}\{u_\Q(x)-x\}=\infty = v_\Q(1)$ by sending $x$ to $-\infty$. 

\emph{Case 2:} If $\dom v_\Q$ is empty then $g(\infty)=\infty$. Furthermore, Proposition~\ref{P:Cierne231117} yields $u_\Q(x)=\infty$ for $\xinf < x < \bliss$, hence again $\sup_{x\in\R}\{u_\Q(x)-x\}=\infty = v_\Q(1)$.

\emph{Case 3:} Finally, suppose there is $y^*>0$ in $\dom v_\Q$. We claim
\begin{equation}\label{eq:Cierne231122b}
 u_\Q(x) = \inf_{y\geq0} \left\{xy+v_\Q(y)\right\}, \qquad x\in\R.
\end{equation}
Indeed, for $x>\xinf$ 
this holds as an easy consequence of Proposition~\ref{P:Cierne231117}.  
In the case that $\xinf$ finite, fix any 
$c\in\dom f$ and note that for all $z > c$ we have $(f(z)-f(c))\leq  -\xinf(z-c)$.
This yields $k\in\R$ such that $v_\Q(y)\leq k-\xinf\,y$ for all $y \geq y^*$, hence also
$$u_\Q(x) =  -\infty= \inf_{y\geq0} \left\{xy+v_\Q(y)\right\}, \qquad x<\xinf.$$
Thus $u_\Q$ equals the concave conjugate of $-v_\Q$ on $\R\setminus\{\xinf\}$. Concavity and upper semicontinuity of $u_\Q$ at $\xinf$ yields the equality \eqref{eq:Cierne231122b} everywhere. The Fenchel--Moreau theorem (e.g., \cite[Theorem~12.2]{rockafellar.70}) and the lower semicontinuity of $v_\Q$ now yield $\sup_{x\in\R}\{u_\Q(x)-xy\}=v_\Q(y)$ for all $y\in\R$. Taking $y=1$ completes the proof.
\end{proof}

\begin{remark}\label{R:Cierne231118}
By comparing Propositions~\ref{P:Cierne231117} and \ref{P:Cierne231118}, we observe that, for a given $\Q$, finiteness of the certainty equivalent of the divergence utility, i.e., $v_\Q(1)<\infty$, implies finiteness of the certainty equivalent of the expected utility, i.e., $u_\Q(x)<g(\infty)$ for all $\xinf<x<\bliss$. However, the converse fails in general since by \textcite[Lemma~5.1]{kramkov.schachermayer.99}, for arbitrary $y^*>0$, there exist $\Q$ and $f$ such that $v_\Q(y)<\infty$ only for $y\geq y^*$. Selecting, e.g., $y^*=2$, yields an example of a complete market, where the certainty equivalent of the maximal expected utility is finite but the divergence utility is infinite.
\end{remark}
%
\subsection{Canonical representation of uniformly weighted divergence utility}\label{SS:3.2}
Theorem~\ref{T1} and Corollary~\ref{C:240221} reveal the following explicit form of divergence utility over $L^\infty$. In order for $V(X)$ to be finite for \emph{all} $X\in L^\infty$, it is necessary and sufficient that $\dom g$ is unbounded and $1\in\dom f$. 
\begin{enumerate}[(A)]
\item If $1$ is an interior point of $\dom f$, then there is $\hat\eta\in\R$ such that $1\in\partial g(\hat\eta)$, which also means $-\hat\eta\in\partial f(1)$. Furthermore, Theorem~\ref{T1} yields
$$V(X) = \max_{\eta\in\R} \E[g(X+\eta)]-\eta,$$ 
hence $V(0)=g(\hat\eta)-\hat\eta = f(1)$. Thus $V$ is canonically represented by 
\begin{equation}\label{eq:Granada251230}
V(X)-V(0) =  \sup_{\eta\in\R}\,\{\E[\widetilde g(X+\eta)]-\eta \} ,\qquad X\in L^\infty,
\end{equation}
with $\widetilde g = g(\id +\hat\eta) - g(\hat\eta)$. Observe that $1\in\partial\widetilde g(0)$. This yields
\begin{align*} 
\widetilde f (y) ={}& \sup_{x\in\R}\,\{ g(x + \hat\eta) - g(\hat\eta) - (x+\hat\eta)y +\hat\eta y \}\\
 ={}& f(y) + \hat\eta (y - 1) -g(\hat\eta)+\hat\eta = f(y) - f(1) + \hat\eta (y - 1).
\end{align*}
\item If $1$ is the unique element of $\dom f$ or if $1\in\dom f$ is a boundary point of $\dom f$, then Theorem~\ref{T1} yields $V(X)=\E[X]+f(1)$ for all $X\in L^\infty$.  In this case, we obtain formula \eqref{eq:Granada251230} with $\widetilde g = \id$,  corresponding to $\widetilde f = \infty\indicator{\id\neq1}$.
\end{enumerate}
In conclusion, every divergence utility that is finite-valued over all of $L^\infty$ and normalized to $V(0)=0$ can be represented by $g$ with unbounded $\dom g$, $g(0)=0$, and $1\in\partial g(0)$. Equivalently, such utility can be represented by a divergence $f$ with $f(1)=0$, $f\geq0$, and $f$ superlinear at either $-\infty$ or $+\infty$. 
%
\subsection{Domains of monotonicity}\label{SS:3.3}

In applications, it is helpful to characterise the subset of the domain where the classical mean--variance utility $V_\MV$ coincides with its monotone modification $V_\MMV$. Here we extend these considerations to other non-monotone divergence utilities, which simultaneously yields a simplified proof for the MV--MMV case itself, cf. \citet{maccheroni.al.09}. To this end, for $p>1$ let $q = \vfrac{p}{(p-1)}$ and consider the family of functions  
\[
g_{p\M}=\frac{1-\left((1+(1-q)\id)\vee0\right)^p}{q};\qquad 
f_{p\M}=\frac{1+(p-1) \id^{q}-p\mkern1.5mu\id}{q}\indicator{\id\geq 0}+\infty\indicator{\id<0},\qquad p>1.
\]
Observe that $g_{2\M}$ and $f_{2\M}$ coincide with $g_\MMV$ and $f_\MMV$, respectively. For comparison, consider the non-monotone version of $g_{p\M}$ and its conjugate,
\[
g_{p}=\frac{1-\left|1+(1-q)\id\right|^p}{q};\qquad 
f_{p}=\frac{1+(p-1) |\id|^{q}-p\mkern1.5mu\id}{q},\qquad p>1.
\]
Here $g_{2}$ and $f_{2}$ coincide with $g_\MV$ and $f_\MV$, respectively. Denote the corresponding families of (concave) divergence utilities by $V_{p\M}$ and $V_p$, $p>1$. 
Thanks to Theorem~\ref{T1} we may choose as effective domains $\dom V_p = L^p$ and $\dom V_{p\M} = L^0_+-L^p_+$, hence 
\[\dom V_p\cap\dom V_{p\M}=L^p.\]
Furthermore, for each $W\in L^p$, Theorem~\ref{T1}  yields unique $\hat\eta_{p\M}(W)\in\R$ and $\hat\eta_p(W)\in\R$ such that 
\begin{equation}\label{eq:FOCp}
\E[g'_p(W+\hat\eta_p(W))]=1; \qquad \E[g'_{p\M}(W+\hat\eta_{p\M}(W))]=1;
\end{equation}
and 
\[
V_p(W) = \E[g_p(W+\hat\eta_p(W))]-\hat\eta_p(W); \qquad V_{p\M}(W)=\E[g_{p\M}(W+\hat\eta_{p\M}(W))]-\hat\eta_{p\M}(W).
\]
\begin{corollary}\label{C:mondom}
For $p>1$ and $W\in L^p$, the following statements hold.
\begin{enumerate}[(1)]
\item\label{1} $V_{p\M}$ is the monotone hull of $V_p$, i.e.,  
$$V_{p\M}(W)=\sup_{Y\in L^0_+}V_{p}(W-Y).$$
\item\label{2} We have $V_p(W) \leq V_{p\M}(W)$ and $\hat\eta_p(W)\leq  \hat\eta_{p\M}(W)$, with $\hat\eta_2(W) = -\E[W]$.
\item\label{3} $\hat\eta_{p\M}(W)$ is increasing in $p$.

\item\label{4} The following are equivalent.
\begin{enumerate}[(i)]
\item\label{4.i} $V_{p\M}(W)= V_p(W)$.
\item\label{4.ii} $\hat\eta_p(W)=  \hat\eta_{p\M}(W)$.
\item\label{4.iii} $W+\hat\eta_p(W)\leq p-1$.
\end{enumerate}
\end{enumerate}
\end{corollary}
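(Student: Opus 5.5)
The plan rests on two elementary facts about the scalar functions. Writing $u = 1-\vfrac{x}{(p-1)} = 1+(1-q)x$ and using $\vfrac{p}{((p-1)q)}=1$, one computes
\[
g_p'(x)=\operatorname{sgn}(u)\,|u|^{p-1},\qquad g_{p\M}'(x)=(u^+)^{p-1},\qquad g_{p\M}(x)=g_p\big(x\wedge(p-1)\big).
\]
Hence $g_{p\M}=\sup_{y\ge0}g_p(\id-y)$ is the monotone hull of $g_p$. Moreover $g_p\le g_{p\M}$ with equality exactly on $(-\infty,p-1]$, and $g'_p\le g'_{p\M}$ with strict inequality exactly on $(p-1,\infty)$. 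For $W\in L^p$ all these derivatives evaluated at $W+\eta$ lie in $L^1$. The maps $\eta\mapsto\E[g'_p(W+\eta)]$ and $\eta\mapsto\E[g'_{p\M}(W+\eta)]$ are nonincreasing, and strictly decreasing where they are positive, so the first-order conditions \eqref{eq:FOCp} determine $\hat\eta_p(W)$ and $\hat\eta_{p\M}(W)$ uniquely. I will use Theorem~\ref{T1} only through the representations $V_p(W)=\E[g_p(W+\hat\eta_p)]-\hat\eta_p$ and $V_{p\M}(W)=\E[g_{p\M}(W+\hat\eta_{p\M})]-\hat\eta_{p\M}$ recorded before the corollary, together with the cash-invariant hull formula as a supremum over $\eta$.

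For \ref{1}, take $Y\in L^0_+$. For every $\eta$ we have $\E[g_p(W-Y+\eta)]-\eta\le\E[g_{p\M}(W+\eta)]-\eta$, because $g_p\le g_{p\M}$ and $g_{p\M}$ is nondecreasing. Taking suprema via Theorem~\ref{T1} gives $V_p(W-Y)\le V_{p\M}(W)$. For the reverse inequality I choose $Y=(W+\hat\eta_{p\M}-(p-1))^+$. Then $W-Y\in L^p$ and $g_p(W-Y+\hat\eta_{p\M})=g_{p\M}(W+\hat\eta_{p\M})$ pointwise, so $V_p(W-Y)\ge V_{p\M}(W)$.

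For \ref{2}, the inequality $V_p\le V_{p\M}$ is \ref{1} with $Y=0$. Since $g'_{p\M}\ge g'_p$, we get $\E[g'_{p\M}(W+\hat\eta_p)]\ge1$, and monotonicity in $\eta$ then gives $\hat\eta_{p\M}\ge\hat\eta_p$. For $p=2$ we have $g_2'=1-\id$, so the first-order condition reads $\E[1-W-\eta]=1$, i.e.\ $\hat\eta_2=-\E[W]$.

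For \ref{3}, I would show that $r\mapsto h_r(x)=\big((1-\vfrac{x}{r})^+\big)^{r}$ is nondecreasing in $r>0$ for each fixed $x$; with $r=p-1$ this is $g'_{p\M}(x)$. On the region where the base is positive, set $t=x/r<1$. The derivative of $r\log(1-x/r)$ in $r$ equals $\log(1-t)+\vfrac{t}{(1-t)}$, and this is $\ge0$ for all $t<1$ because $\log(1-t)\ge-\vfrac{t}{(1-t)}$. Once the base is positive for some $r$ it stays positive for larger $r$. So $\E[g'_{p\M}(W+\eta)]$ is nondecreasing in $p$ and nonincreasing in $\eta$, and the root $\hat\eta_{p\M}(W)$ is therefore nondecreasing in $p$. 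This elementary inequality is the only step where I expect any real care to be needed, particularly at the boundary where $(1-x/r)^+$ vanishes.

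For \ref{4}, I would argue as follows.
\begin{itemize}
\item[\ref{4.iii}$\Rightarrow$\ref{4.ii}:] If $W+\hat\eta_p\le p-1$, then $g'_{p\M}$ and $g'_p$ agree at $W+\hat\eta_p$, so $\hat\eta_p$ solves the $p\M$ first-order condition, and uniqueness gives $\hat\eta_p=\hat\eta_{p\M}$.
\item[\ref{4.ii}$\Rightarrow$\ref{4.iii}:] Subtracting the two first-order conditions gives $\E[(g'_{p\M}-g'_p)(W+\hat\eta)]=0$. The integrand is nonnegative and strictly positive on $\{W+\hat\eta>p-1\}$, so that event is null.
\item[\ref{4.iii}$\Rightarrow$\ref{4.i}:] Here $g_p=g_{p\M}$ on the range of $W+\hat\eta_p$, and by \ref{4.ii} $\hat\eta_p=\hat\eta_{p\M}$, so the two values coincide.
\item[\ref{4.i}$\Rightarrow$\ref{4.iii}:] We have
\[
\E[g_{p\M}(W+\hat\eta_{p\M})]-\hat\eta_{p\M}=V_p(W)\ge\E[g_p(W+\hat\eta_{p\M})]-\hat\eta_{p\M}.
\]
Since $g_{p\M}\ge g_p$, this forces equality almost surely, i.e.\ $W+\hat\eta_{p\M}\le p-1$. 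Then $\hat\eta_{p\M}$ satisfies the $g_p$ first-order condition, so $\hat\eta_{p\M}=\hat\eta_p$ and \ref{4.iii} follows.
\end{itemize}
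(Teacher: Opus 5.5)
Your proof is correct and follows essentially the same route as the paper. It uses the pointwise identity $g_{p\M}=g_p(\id\wedge(p-1))$ together with the truncation $Y=(W+\eta-(p-1))^+$ for item (1), comparison of the first-order conditions for items (2)--(3), and the strict inequalities $g_p<g_{p\M}$ and $g'_p<g'_{p\M}$ on $(p-1,\infty)$ for item (4). The differences are minor: you take $\eta=\hat\eta_{p\M}$ directly in (1), and in (4)(i)$\Rightarrow$(iii) you compare at $\hat\eta_{p\M}$ and then invoke uniqueness of the root, whereas the paper compares at $\hat\eta_p$. You also supply the elementary check that $g'_{p\M}$ is nondecreasing in $p$, which the paper only asserts.
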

Items~\ref{2} and \ref{3} are helpful when computing $\hat\eta_p(W)$ and/or $\hat\eta_{p\M}(W)$ numerically for $1<p\neq2$ as they offer one-sided bounds for the optimal shift. 
\begin{proof}
\ref{1} Observe that $g_{p\M}$ is the monotone hull of $g_{p}$. The proof then proceeds exactly as in \citet{cerny.20} for $p=2$ by arguing that the actions of taking the monotone hull and the translation-invariant hull commute since they both can be represented by infimal convolutions over $L^0$.

Alternatively, we now provide a short self-contained proof. First, note that \(g_{p\M}\) is the pointwise monotone hull of
\(g_p\). Now let \(Y\in L^0_+\). Then, for every
\(\eta\in\R\),
$
    g_p(W-Y+\eta)\leq g_{p\M}(W+\eta)$.
Taking expectations, subtracting \(\eta\), and then taking the supremum over
\(\eta\), we obtain
$    V_p(W-Y)\leq V_{p\M}(W)$.
Taking the supremum over all such \(Y\) gives one inequality. 
For the reverse inequality, fix \(\eta\in\R\) and set
$Y_\eta=(W+\eta-(p-1))^+$.
Then \(Y_\eta\in L^0_+\), \(W-Y_\eta\in L^p\), and
$g_{p\M}(W+\eta)=g_p(W-Y_\eta+\eta)$.
Therefore
\[
    \E[g_{p\M}(W+\eta)]-\eta
    =
    \E[g_p(W-Y_\eta+\eta)]-\eta
    \leq
    V_p(W-Y_\eta)
    \leq
    \sup_{\substack{Y\in L^0_+\\ W-Y\in L^p}} V_p(W-Y).
\]
Taking the supremum over \(\eta\in\R\) yields the reverse inequality. 

\ref{2} The first statement follows from \ref{1}. For the second, $g'_{p\M}\geq g'_{p}$ yields $\E[g'_{p\M}(W+\hat\eta_p(W))]\geq 1$, which shows that $\hat\eta_p(W)$ is a lower bound for $\hat\eta_{p\M}(W)$.
The last assertion follows by direct calculation with $g'_2 = 1 - \id$.

\ref{3} The assertion follows because $g'_{p\M}$ is increasing in $p$.

\ref{4} \ref{4.i} $\Rightarrow$ \ref{4.iii} The maximal domain where $g_p$ is increasing equals $(-\infty,p-1]$. If $\P[W+\hat\eta_p(W)> p-1]>0$, then 
\begin{align*}
V_p(W) = \E[g_p(W+\hat\eta_p(W))]-\hat\eta_p(W)<{}& \E[g_{p\M}(W+\hat\eta_{p}(W))]-\hat\eta_{p}(W)\\
\leq {}& \E[g_{p\M}(W+\hat\eta_{p\M}(W))]-\hat\eta_{p\M}(W) = V_{p\M}(W).
\end{align*}

\ref{4.iii} $\Rightarrow$ \ref{4.ii} Since $g'_p=g'_{p\M}$ on $(-\infty,p-1]$, we get $\E[g'_{p\M}(W+\hat\eta_{p}(W))]=1$, hence \ref{4.ii} by \eqref{eq:FOCp}.

\ref{4.ii} $\Rightarrow$ \ref{4.i} Note that \ref{4.ii} and \eqref{eq:FOCp} yield \ref{4.iii}, which in turn yields \ref{4.i}.
\end{proof}

To allow for different levels of risk aversion, we now fix $\gamma>0$ and consider utilities $V_p(\gamma\id)/\gamma$ and their monotone hulls $V_{p\M}(\gamma\id)/\gamma$. By Corollary~\ref{C:mondom}\ref{3}, the domain of monotonicity of $V_{p}(\gamma\id)/\gamma$ is precisely the set of those $W\in L^p$ that satisfy  
\[ 
\gamma W + \hat\eta_p(\gamma W)\leq  p - 1.
\]
For $p=2$, this recovers \cite[Lemma~2.1]{maccheroni.al.09} in view of $\hat\eta_2(\gamma W)=-\gamma\E[W]$. Observe, however, that for other values of $p$, one does not obtain an easily amenable expression for $\hat\eta_p(\gamma W)$, nor is it easy to trace the dependence of this expression on $\gamma$ explicitly.  
\def\MR#1{\href{http://www.ams.org/mathscinet-getitem?mr=#1}{MR#1}}
\def\ARXIV#1{\href{https://arxiv.org/pdf/#1}{arXiv:#1}}
\def\DOI#1{\href{https://doi.org/#1}{#1}}

\end{document}